\title{Complexity Thresholds for the Constrained Colored Token Swapping Problem}
\author{Davide Bilò}{Department of Information Engineering, Computer Science, and Mathematics\\University of L'Aquila, Italy}{davide.bilo@univaq.it}{https://orcid.org/0000-0003-3169-4300}{}
\author{Stefano Leucci}{Department of Information Engineering, Computer Science, and Mathematics\\University of L'Aquila, Italy}{stefano.leucci@univaq.it}{https://orcid.org/0000-0002-8848-7006}{}
\author{Andrea Martinelli}{Department of Information Engineering, Computer Science, and Mathematics\\University of L'Aquila, Italy}{andrea.martinelli1@student.univaq.it}{}{}
\authorrunning{D. Bilò, S. Leucci, and A. Martinelli}
\keywords{Colored token swapping, Token sliding, Friends and strangers}
\newcommand{\pspace}{\ensuremath{\mathsf{PSPACE}}\xspace}
\newcommand{\np}{\ensuremath{\mathsf{NP}}\xspace}
\newcommand{\cts}{\textsc{CTS}\xspace}
\newcommand{\ccts}{\textsc{CCTS}\xspace}
\newcommand{\ncl}{\textsc{NCL}\xspace}
\newcommand{\pmg}{\textsc{PMG}\xspace}
\newcommand{\cctss}{\textsc{CCTS}\ensuremath{{}^\star}\xspace}
\newcommand{\fas}{\textsc{FAS}\xspace}
\newcommand{\B}{\ensuremath{\mathcal{B}}}
\begin{document}

\maketitle

\begin{abstract}
    Consider the following puzzle: a farmland consists of several fields, each occupied by either a farmer, a fox, a chicken, or a caterpillar. Creatures in neighboring fields can swap positions as long as the fox avoids the farmer, the chicken avoids the fox, and the caterpillar avoids the chicken. The objective is to decide whether there exists a sequence of swaps that rearranges the creatures into a desired final configuration, while avoiding any unwanted encounters.

    The above puzzle can be cast an instance of the \emph{colored token swapping} problem with $k = 4$ colors (i.e., creature types), in which only certain pairs of colors can be swapped.
    We prove that such problem is \pspace-hard even when the graph representing the farmland is planar and cubic. 
    We also show that the problem is polynomial-time solvable when at most three creature types are involved. We do so by providing a more general algorithm deciding instances with arbitrary values of $k$, as long as the set of all admissible swaps between creature types induces a \emph{spanning star}.

    Our results settle a problem explicitly left open in [Yang and Zhang, IPL 2025], which established $\mathsf{PSPACE}$-completeness for eight creature types and left the complexity status unresolved when the number of creature types is between three and seven.
\end{abstract}

\section{Introduction}

\begin{figure}[t]
    \centering
    \includegraphics[scale=0.74]{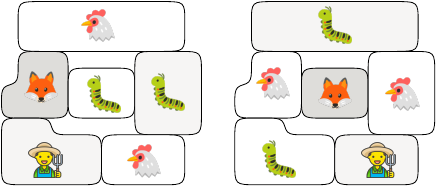}
    \caption{A possible arrangement of fields and creatures of the brain teaser.}
    \label{fig:example}
\end{figure}

Consider farmland that comprises of multiple fields, each of which is occupied by either a farmer, a fox, a chicken, or a caterpillar, as in Figure~\ref{fig:example}. 
The creatures on two neighboring fields can swap their positions, causing them to meet along the way. However, the fox does not want meet the farmer, the chicken does not want to meet the fox, and the caterpillar does not want to meet the chicken. 
The goal is repositioning the creatures so that, in their resulting arrangement, each field
is occupied by a creature of a given type, and no undesirable meetings occur in the process.

This puzzle, which is a generalization of a well-known brain teaser, can be cast as an instance of the \emph{colored token swapping} (\cts) problem on graphs \cite{MiltzowNORTU16}, with the additional constraint that only some of the swaps are allowed.
In the \cts problem we have an undirected and connected \emph{base graph} $G$, in which each vertex
contains a \emph{token} of one out of $k$ possible \emph{colors} in $\{1, 2, \dots, k\}$, where tokens of the same color are indistinguishable.
Two tokens on neighboring vertices can be \emph{swapped} and the problem asks, given an \emph{initial} and a \emph{final} arrangement of tokens on $G$, to find a sequence of swaps that transforms the initial configuration into the final one.
Since all \cts instances can be solved, the scientific community has focused on studying the optimization version of the problem, where the goal is minimize the number of swaps \cite{MiltzowNORTU16,AkersK89,vaughan1999factoring,YamanakaDIKKOSS15,HikenW25}.

In the \emph{constrained} version of \cts (\ccts for short), we are additionally given a \emph{swap graph} whose set of vertices is exactly the set of token colors $\{1, 2, \dots, k\}$  and a swap between two tokens is only allowed if their colors are adjacent in the swap graph. An example instance of \ccts is shown in Figure~\ref{fig:example2}.
A famous puzzle which is captured by \ccts is the Fifteen Puzzle \cite{wilson1974graph} since each numbered tile can be modeled by a token with a unique color, and the blank tile can be interpreted as an additional token of a sixteenth color, which is allowed to swap with any other color. 
More broadly, \ccts generalizes the \emph{pebble motion problem} studied by Kornhauser, Miller, Spirakis \cite{KornhauserMS84} which extents the Fifteen Puzzle to arbitrary graphs and multiple blank positions. In contrast with \cts, an instance of \ccts is not necessarily solvable, which motivates the study of its decision version.

The version of \ccts in which all tokens have distinct colors is known as the \emph{Friends and Strangers} (\fas) problem, which was introduced in \cite{defant2021friends}.
In their paper, the authors prove several structural properties of the \emph{friends-and-strangers graph} whose vertices represent all possible arrangements of tokens, and two vertices are adjacent if the two configuration differ by a single permissible swap. Additional structural properties of the friends-and-strangers graph have been shown in \cite{Jeong26}.
Concerning the computational complexity, \cite{YangZ25} proved that \fas is \pspace-complete even when the base graph is subcubic and planar. Actually, the instances of \fas considered in \cite{YangZ25} are equivalent to \ccts instances with $8$ colors
and the swap graph in Figure~\ref{fig:example2}~(d). In the concluding remarks of their paper, the authors leave open the problem of understanding the computational complexity of \ccts when the number of colors is between $3$ and $7$.\footnote{\ccts is decidable in polynomial time when there are only two colors, since the swap graph must be either empty or complete (in this latter case, the \ccts instance is actually an instance of \cts).}

\subparagraph*{Our results.} 
We close the problem left open in \cite{YangZ25} by proving that \ccts is \pspace-hard for $k \ge 4$, and decidable in polynomial time for $k = 3$.

In particular, the \pspace-hardness result holds whenever the swap graph contains an induced path on four vertices, as it is the case in the above brain teaser, and even if the base graph is cubic and planar.
On the positive side, we actually show a more general result: the \ccts problem is decidable in polynomial time whenever the swap graph is a star (regardless of the structure of the base graph).
This implies that the problem is solvable in polynomial time when $k=3$ since the swap graph is either disconnected, complete, or a star, and the former two cases are easy to handle.
This result extends the characterization given by Kornhauser, Miller, Spirakis \cite{KornhauserMS84} for the pebble motion problem on graphs. 

\begin{figure}
    \centering
    \includegraphics[scale=0.85]{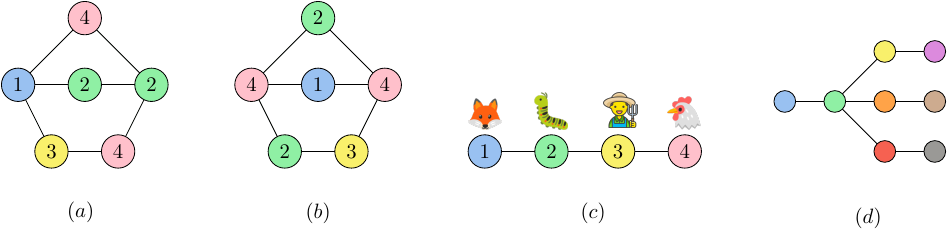}
    \caption{(a), (b), and (c): The initial configuration (a), the final configuration (b) and the swap graph (c) of the \ccts instance corresponding to the brain teaser in Figure~\ref{fig:example}. (d): The swap graph used in the \pspace-hardness reduction of \ccts given in \cite{YangZ25}.}
    \label{fig:example2}
\end{figure}

\subparagraph*{Related works.} 
We are not aware of any other work on \ccts other than those mentioned in the introduction.
Regarding \cts, the problem is known to be solvable in polynomial time when the graph is a star or a path \cite{BonnetMR18}, a clique with $O(1)$ colors \cite{YamanakaHKKOSUU18}, a complete bipartite graph with $O(1)$ colors \cite{BiloFG024}, and when $k=2$ \cite{YamanakaHKKOSUU18}. In the special case in which all token colors are distinct, the problem is also polynomial-time solvable on cliques \cite{cayley1849lxxvii}, cycles \cite{Jerrum85}, brooms \cite{vaughan1999factoring,KawaharaSY19}, lollipops \cite{KawaharaSY19}, squares of paths \cite{HeathV03}, complete split graphs \cite{yasui2015swapping}, complete bipartite graphs \cite{YamanakaHKKOSUU18}, and cographs \cite{TonettiSU24}.
On the negative side, \cts is \np-hard for planar bipartite graphs with maximum degree $3$ \cite{YamanakaHKKOSUU18}, cliques \cite{BonnetMR18}, complete bipartite graphs \cite{BiloFG024}, and trees \cite{AichholzerDKLLM22}, even when all token colors are distinct.
Finally, we mention that \cts is $\mathsf{APX}$-hard \cite{MiltzowNORTU16} and its approximability has received considerable attention (see \cite{HikenW25} and the references therein).

\subparagraph*{Preliminaries.} 

In the following we assume, w.l.o.g., that the base graph contains at least one token of each color in the swap graph, since otherwise one can equivalently consider the instance in which all unused colors (i.e., vertices) have been deleted from the swap graph.
Moreover, we assume that the swap graph is connected, since otherwise the problem reduces to solving multiple instances of \ccts, where each instance has (i) the swap graph induced by a connected component $C$ of the original swap graph, and (ii) the base graph induced by all and only the vertices of the original base graph that contain a token some color in $C$.
Finally, we assume that the number of tokens of each color in the initial and final configurations is the same, since otherwise we trivially have a no-instance.

We describe our \pspace-hardness result in the next section, our  algorithm for star swap graphs is discussed in Section~\ref{sec:poly}, and we provide some concluding remarks in Section~\ref{sec:conclusion}.

\section{PSPACE-Completeness}

In this section we establish the \pspace-completeness of \ccts even when the base graph $G$ is cubic and planar, and the swap graph is a path on $4$ vertices. We assume w.l.o.g.\ that a token of color $i$ can only be swapped with tokens of color $i-1$ (if such a color exists) or $i+1$ (if such a color exists), as in the swap graph of Figure~\ref{fig:example}~(c).

We prove our result by providing a reduction from the \emph{configuration-to-configuration nondeterministic constraint logic problem on cubic planar AND/OR constraint graphs} \cite{HearnD05}, which we abbreviate with \ncl for conciseness. 
To define the \ncl problem, consider an undirected cubic planar graph $\overline{G}$ in which each edge has a weight in $\{1,2\}$. 
We refer the edges of weight $1$ (resp.\ weight $2$) as \emph{light} edges (resp.\ heavy edges).\footnote{In \cite{HearnD05} these edges are called red and blue. We use the terms light and heavy to avoid confusion with token colors.}
Each \emph{node}\footnote{We use the term \emph{node} when referring to vertices of a \ncl graph, and the term \emph{vertex} when referring to vertices of a \ccts graph.} of $\overline{G}$ has exactly three incident edges and is either an \emph{AND} node or an \emph{OR} node: an AND node has one incident heavy edge and two incident light edges, while all three edges incident to an OR node are heavy.
A \emph{configuration} of $\overline{G}$ is an orientation of its edges, and it is said to be \emph{valid} if, for each node $v$, the sum of the weights of the in-edges of $v$ is at least $2$. 
This means that if $v$ is an OR node, then at least one of $v$'s incident edges needs to be directed towards $v$. When $v$ is an AND node, then the following implication must hold: if the heavy edge incident to $v$ is directed away from $v$ then both light edges incident to $v$ must be directed towards $v$.
In the \ncl problem we are given two valid configurations of the same graph $\overline{G}$, referred to as the \emph{initial} and the \emph{final} configuration, and the goal is deciding whether it is possible to obtain the final configuration from the initial one via a sequence of edge flips that only result in intermediate valid configurations.
This problem is known to be \pspace-complete \cite{HearnD05}.

To reduce an instance of \ncl to an instance of \ccts, we show how to replace each edge, OR node, and AND node of a configuration with a suitable \emph{gadget} subgraph. 
Each gadget for an AND or an OR node will have three distinct vertices designated as \emph{connection points}, one for each of the edges incident to the node. Each edge gadget has two vertices designated as connection points. If an edge $e$ is incident to node $v$ in the \ncl instance, then a suitable connection point of $v$'s gadget will be linked, though an undirected edge, with one of the connection points of $e$'s gadget.
Gadgets will only interact by exchanging tokens placed on their connection points, and the only tokens to cross gadget boundaries will have colors $1$ or $2$, while tokes of color $3$ and $4$ will be confined within their starting gadgets. It is useful to think of tokens of color $1$ as ``fillers'' and to focus on tokens of color $2$ instead. This is because the only way to flip an edge requires bringing a token $t$ of color $2$ into that edge's gadget and, as a result, $t$ moves from the node gadget of the old edge head to the node gadget of the new edge head.

In the following subsections we first describe the gadgets, and then we argue on the correctness of our reduction. For the sake of simplicity, we first provide a reduction where the base graph  of \ccts is planar with maximum degree $3$, and then we modify the reduction to obtain a cubic base graph while still preserving planarity. 

\subsection{The edge gadget} 

\begin{figure}[t]
    \centering
    \includegraphics[scale=0.76]{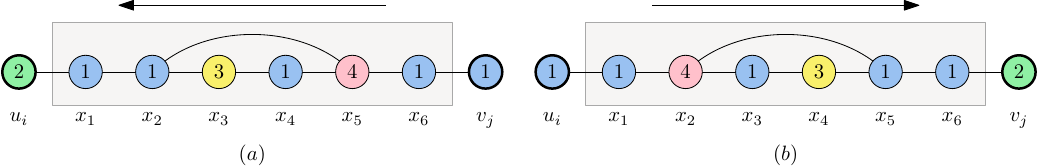}
    \caption{An edge gadget oriented towards $u$ (a) and towards $v$ (b), enclosed in a gray rectangle. The connection points of the edge gadget are vertices $x_1$ and $x_6$. The connection points $u_i$ and $v_j$ of the gadgets modeling vertices $u$ and $v$ are shown in bold.}
    \label{fig:edge_gadget}
\end{figure}

The edge gadget is used to model an edge $e = (u,v)$ of an \ncl configuration, and is the same for both light and heavy edges.
It consists of a path with $6$ vertices $x_1, x_2, \dots, x_6$ (in order from one endvertex to the other) plus the additional edge $(x_2, x_4)$, is connected to the gadget representing node $u$ via the edge $(u_i, x_1)$, where $u_i$ is a connection point that belongs to the gadget of $u$, and to the gadget representing node $v$ via the edge $(x_6, v_j)$, where $v_j$ is a connection point that belongs to the gadget of $v$.

The idea of the gadget is the following: an arrangement of tokens always defines an orientation, which can be either towards $u$ (and away from $v$) or towards $v$ (and away from $u$). The gadget's orientation matches that of edge $e$ in \ncl configuration. 
The gadget cannot change orientation on its own, but this becomes possible if it receives a token of color $2$ from the (gadget of the) node it is directed towards. This enables a sequence of token swaps that flips the gadget's orientation while moving the token into the (gadget of the) opposite node.

When the gadget is not undergoing a change in orientation, it is in one of the two  configurations shown in Figure~\ref{fig:edge_gadget}, which are mirrored versions of each other.
The sequences of token colors, when read from $x_1$ to $x_6$, is either $\langle 1,1,3,1,4,1 \rangle$ or $\langle 1,4,1,3,1,1 \rangle$. The former corresponds to orienting $e$ towards $u$, while the latter corresponds to orienting $e$ towards $v$.
In general, all reachable configurations confine the token of color $4$ to be either on $x_2$ or on $x_5$. In the former case we say that the gadget points towards $v$ (and away from $u$), while in the later the gadget points towards $u$ (and away from $v$).

Suppose that the gadget is currently oriented towards $u$, that a token of color $2$ is placed on $u_i$, and that a token of color $1$ is placed on $v_j$, as in Figure~\ref{fig:edge_gadget}~(a). Then it is possible to reverse its orientation via the sequence of swaps shown (compactly) in Figure~\ref{fig:edge_gadget_flip}. In this sequence, the token of color $2$ \emph{traverses} the gadget from left to right, enabling a swap between the tokens with colors $3$ and $4$, and ends up on vertex $v_j$. Observe that a token of color $1$ is placed on $u_i$ as a result of the flip.

\begin{figure}[t]
    \centering
    \includegraphics[scale=0.76]{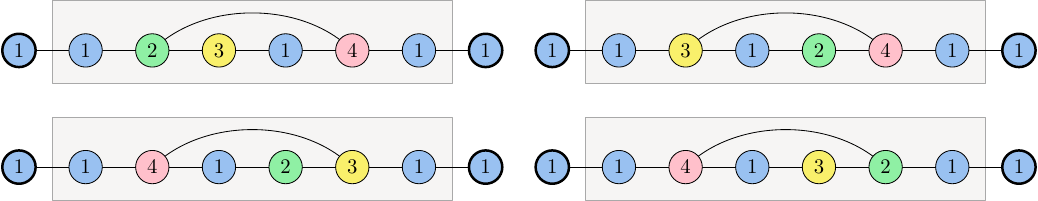}
    \caption{From top left to bottom right: four intermediate configuration encountered when the of color $2$ in Figure~\ref{fig:edge_gadget}~(a) is used to traverse the gadget, causing it to flip its orientation and to end up in the configuration shown in Figure~\ref{fig:edge_gadget}~(b). For the sake of conciseness, not all swaps are shown.}
    \label{fig:edge_gadget_flip}
\end{figure}

An important observation is that, while it is possible for a token of color $2$ to enter the gadget from $v$'s side, such token cannot be used to traverse the gadget as it is blocked by the token of color $4$ placed in $x_5$. This holds even if multiple tokens of color $2$ enter the gadget from $v$'s side.
Moreover, the gadget behaves as intended even if two tokens of color $2$ try to simultaneously traverse the gadget from $u$'s side to $v$'s side. Indeed, only one of these tokens can get to $v$'s side of the gadget, and this is only possible by flipping the gadget's orientation. In any case, the other token remains confined on $u$'s side, in either $x_1$ or $x_2$.

A symmetric argument holds when the gadget is directed towards $v$: a token of color $2$ entering from $v$'s side can reverse the gadget's orientation, and the gadget behaves correctly when two such tokens enter from $u$, or multiple tokens of color $2$ enter from $u$'s side.

\subsection{The OR gadget}
\label{sub:or_gadget}

The OR gadget is used to model each OR node $v$ of an \ncl configuration, and it consists of a clique on $3$ vertices $v_1, v_2, v_3$, each of which which serves as a connection points for the edge gadgets of the three edges of $\overline{G}$ incident to $v$. 
The vertices of the gadgets will only contain tokens of color $1$ or $2$, and there will be at most $2$ tokens of color $2$. The specific placement of these tokens in the vertices of the gadget will not be relevant.
In all configurations of the gadget, the number of edges incident to $v$ that are oriented towards $v$ will be at least $\eta + 1$, where $\eta$ is the number of tokens of color $2$ in the gadget.
This ensures that at least one of the incident edges must be oriented towards $v$ at all times (see Figure~\ref{fig:or_gadget}).

\begin{figure}
    \centering
    \includegraphics[scale=0.76]{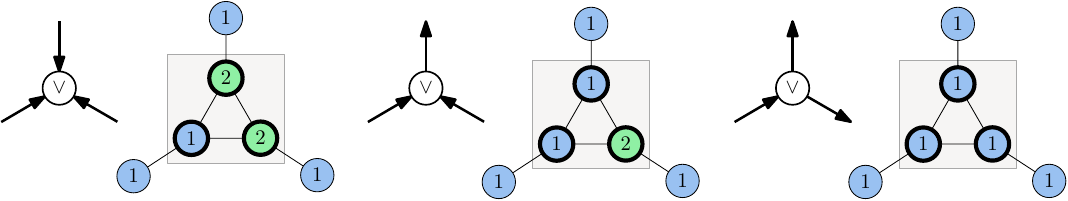}
    \caption{Three different orientation of the edges incident to an OR node, each  with a corresponding configuration of the tokens in the associated OR gadget (enclosed in a gray rectangle). All vertices of the OR gadget are connection points and are shown in bold. The neighboring connection points of the edge gadgets are shown outside the gray rectangles. Only the number of tokens of color $2$ in the gadget is relevant, not the specific vertices on which they are placed.}
    \label{fig:or_gadget}
\end{figure}

If $\eta \ge 1$ and some edge gadget connected to a generic vertex $v_i$ is directed inward, we can flip the edge gadget by first placing a token color $2$ in $v_i$, which requires at most one swap with some vertex $v_j$ of the OR gadget, and then we use such token to traverse the edge gadget. 
Conversely, to flip an edge gadget that is currently oriented outward and is connected to vertex $v_i$, we first ensure that $v_i$ contains a token of color $1$ (which can always be guaranteed by performing at most one swap with some vertex $v_j$), and then we perform a traversal of the edge gadget, which moves a token of color $2$ into $v_i$. 

Observe that it is possible for two tokens of color $2$ to leave the gadget through the same connection point and enter the same edge gadget, however at most one of then will be able to traverse the edge gadget (by flipping it, if its current orientation allows), while the other will be blocked.

\subsection{The AND gadget}

The AND gadget is used to model each AND node $v$ of a \ncl configuration. 
It consists of a path on $6$ vertices  $v_1, v_2, \dots, v_6$ (in order from one endvertex to the other) where $v_2$, and $v_4$ serve as connection points for the (edge gadgets of modeling the) two light edges incident to $v$, while $v_5$ is the connection point for the (edge gadget modeling the) heavy edge incident to $v$ (see Figure~\ref{fig:and_gadget}).

There are two main (classes of) arrangements of the tokens in the gadget, which we will refer to as the \emph{up state} and \emph{down state}. 
In the up state, the sequence of colors of the token on vertices $v_1, \dots, v_4$ is $\langle 2, 4, 2, 4 \rangle$, $v_6$ has a token of color $3$, and $v_5$ has a token of a color in $\{1, 2\}$. Examples are shown in Figure~\ref{fig:and_gadget}~(a) and (b).
In the down state $v_1$ contains a token of color $3$, $v_3$ and $v_5$ contain tokens of color $4$, $v_6$ contains a token of color $2$, and the tokens in $v_3$ and $v_4$ have a color in $\{1,2\}$. The two possible arrangements are shown in Figure~\ref{fig:and_gadget}~(c) and (d).

The behavior of the gadget is the following: in the down state, the (edge gadget corresponding to the) heavy edge incident to $v$ must be directed towards $v$ and it cannot be flipped, since it is impossible for a token of color $2$ to enter the edge gadget.
Indeed, any such token is blocked by token of color $4$ in vertex $v_5$. The two light edges can point in any direction and they can be freely reoriented, as long as the gadget of their other endpoint allows. When a light edge points towards (resp.\ away from) $v$, a token of color $2$ can be placed (resp.\ a token of color $1$ is placed) on that edge's connection point.
In the up state, the light edges incident to $v$ must be directed towards $v$, and cannot be flipped (since the only tokens of color $2$ that can reach the corresponding edge gadgets are ``locked'' in vertices $v_1$ and $v_3$ by the tokens of color $4$ in vertices $v_2$ and $v_4$). The heavy edge incident to $v$ can be directed either towards or away from $v$, and it can be flipped by traversing it with the token of color $2$ which either placed in or removed from vertex $v_5$.

\begin{figure}
    \centering
    \includegraphics[scale=0.76]{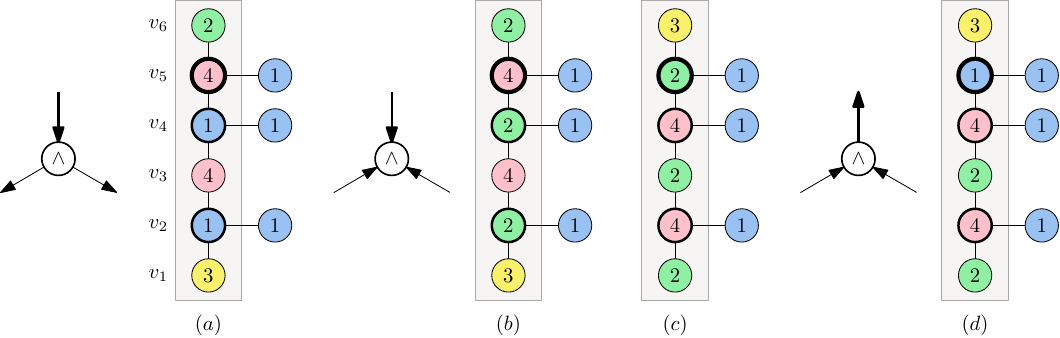}
    \caption{Several possible configurations of the AND gadget (enclosed in a gray rectangle) along with the corresponding AND node of the \ncl instance and its incident edges (shown on the left of the gadget). The connection points of the gadget are in bold, and the line thickness indicates whether the corresponding edge is light or heavy.  The neighboring connection points of the edge gadgets are shown outside the gray rectangles.
    (a) and (b) show two possible arrangement of the tokens when the AND gadget is in the down state. (c) and (d) show two arrangements in the up state. Notice how it is possible to transition from (b) to (c) by ``sliding'' the token of color $3$ from $v_1$ to $v_6$, and vice versa.}
    \label{fig:and_gadget}
\end{figure}

It is possible to transition from the down state to the up state by ``sliding'' the token of color $3$ from vertex $v_1$ to vertex $v_6$ following the path in the gadget. Since a token of color $3$ can only be swapped with tokens of colors $2$ or $4$, this requires vertices $v_2$ and $v_4$ to contain tokens of color $2$, which forces the (edge gadgets corresponding to the) light edges to be directed towards $v$. This also places a token of color $2$ on $v_5$.
Similarly, it is possible to transition from the up state to the down state by ``sliding'' the token of color $3$ from vertex $v_6$ to vertex $v_1$. This locks a token of color $2$, which must be initially placed in $v_5$, into vertex $v_6$, thus forcing the heavy edge to be directed towards $v$. Moreover, two other tokens of color $2$ are moved from $v_1$ and $v_3$ to the connection points for the light edges.

Notice that, while it is possible for the gadget to be in an intermediate state which is neither up nor down, it is never convenient to leave the gadget with such an arrangement of tokens. Indeed, any such configuration still has a token of color $2$ in vertex $v_6$, and it also prevents flipping of either one or both of the light edges, i.e., it is more restrictive than the down state.

\subsection{Combining the gadgets}

\begin{figure}[t]
    \centering
    \includegraphics[scale=0.76]{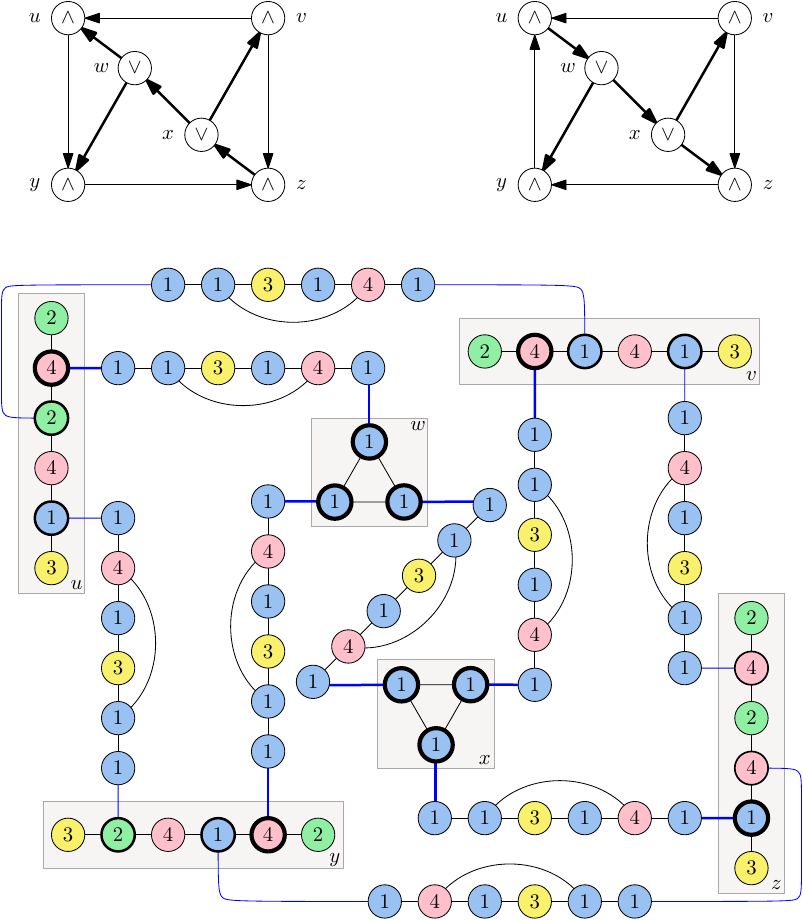}
    \caption{Top left and top right: the initial and final configurations of a \ncl instance, respectively. Heavy edges are bolder.
    Bottom: the initial configuration of the \ccts instances obtained by applying our reduction the the initial \ncl configuration. AND and OR gadgets are enclosed in gray rectangles labeled with the name of the corresponding node in the \ncl graph, while edges linking a the connection points of AND/OR gadgets with those of edge gadgets are shown in blue, and their thickness indicates whether the corresponding edge of the \ncl instance is light or heavy.}
    \label{fig:instance}
\end{figure}

The initial and the final configurations of the \ccts instance are respectively obtained by replacing each edge, AND vertex, and OR vertex of the initial and final configurations for the \ncl instance with the respective gadgets described above, where the appropriate arrangement of tokens is chosen to reflecting the edge orientations in \ncl instance. Figure~\ref{fig:instance} shows an example of a \ncl instance, and the \ccts configuration resulting by applying our reduction to the initial \ncl configuration. An interactive version of Figure~\ref{fig:instance} is available at \url{https://www.isnphard.com/g/constrained-colored-token-swapping/}.

Since the graph of the \ncl instance is planar, and each gadget is outerplanar, the resulting graph is also planar. Moreover, it is immediate to check that each vertex of the \ccts instance has degree at most $3$.
We now show that our \ccts instance is solvable if and only if the \ncl instance is solvable, thus establishing the \pspace-completeness of \ccts with $4$ colors.\footnote{The membership of \ccts in \pspace has been shown in \cite{YangZ25}.}

\subparagraph{A solution for \ncl implies a solution for \ccts.}
Suppose that the \ncl instance is solvable, and let $\langle e_1, e_2, \dots, e_k \rangle$ be a sequence of edges to be flipped to transforms the initial configuration into the final configuration while ensuring that all intermediate configurations remain valid. Let $G_\ell$ be the directed graph obtained after performing the first $\ell$ edge flips, so that $G_0$ (resp.\ $G_k$) coincides with the initial (resp. final) \ncl configuration.

We can solve the \ccts instance by simulating each edge flip, in the same order. More precisely, immediately before the generic $\ell$-th edge flip, the \ccts configuration will be exactly the one that would be obtained by applying our reduction directly from $G_{\ell-1}$, and we show a sequence of token swaps transforming such configuration into that obtained by applying our reduction to $G_k$.
To describe how the generic $\ell$-th edge flip is implemented, let $u$ and $v$ be the head and the tail nodes of $e_\ell$ in $G_{\ell-1}$, respectively, so that $e_\ell$ needs to be reoriented from pointing towards $u$ to pointing towards $v$. Let $u_i$ (resp.\ $v_j$) be the connection point of $u$'s gadget (resp.\ $v$'s gadget)
which is incident to (a connection point of the edge gadget modeling) $e_\ell$. 
We proceed in multiple steps, each of which can consist of zero, one, or multiple token swaps.

The first step takes care of placing a token of color $2$ initially sitting on some vertex of $u$'s gadget on vertex $u_i$, if that is not already the case.
If $u$ is an OR node, then $e_\ell$ is a light edge and $u$ has at least one other incoming edge in $G_{\ell-1}$ other than $e_\ell$. Then, $u$'s gadget contains at least one token of color $2$ which is either already in $u_i$ or can be moved to $u_i$ with one swap. 
Otherwise, $u$ is an AND node and we proceed differently depending on whether $e_\ell$ is a light or a heavy edge. If $e_\ell$ is a light edge, then $u$ has an incoming heavy edge in $G_{\ell-1}$, and $u$'s gadget is an AND gadget in the down state. Then, vertex $x$ already contains a token of color $2$ and no swap is necessary.
If $e_\ell$ is heavy edge, then $u$ has two incoming light edges in $G_{\ell-1}$, $u_i = u_5$, and $u$'s gadget is in the down state with two tokens of color $2$ on $u_2$ and $u_4$. We slide the token of color $3$ is $u$'s gadget to transition it to the up state, thus placing a token of color $2$ on $u_5$.

The second step flips the direction of the edge gadget modeling $e_\ell$ from pointing towards $u$ to pointing towards $v$. This is done by using the token $t$ of color $2$, which is now on $u$'s connection point for $e_\ell$, to traverse  $e_\ell$'s gadget. As a result, $t$ is moved from $u_i$ to $v$'s connection point for $e_\ell$, i.e., to $v_j$.

The third and final step may rearrange the tokens in $v$'s gadget, and is only needed when $v$ is an AND node and $e_\ell$ is a heavy edge.
In this case, $e_\ell$ is oriented away from $v$ in $G_{\ell-1}$, which implies that both light edges incident to $v$ are oriented inward. Then, $v$'s gadget is in the up state and, after the second step, vertex $v_5$ contains a token of color $2$. We transition the gadget to the down state by sliding the token of color $3$ from $v_6$ to $v_1$.

\subparagraph{A solution for \ccts implies a solution for \ncl.}
Suppose now that the \ccts instance is solvable. We show that this implies the existence of a sequence of edge flips solving the \ncl instance. In particular, we look at a sequence of swaps solving the \ccts instance and, whenever the tokens of color $3$ and $4$ are swapped in an edge gadget, we swap the corresponding edge in the \ncl instance to match the new orientation of the gadget. We claim that all \ncl configurations encountered during such a sequence of edge flips are valid. Since the direction of the edge gadgets always match the directions of the associated \ncl edges, this implies that the configuration obtained after performing all edge flips is exactly the sought final configuration of the \ncl instance. 

Given any \ccts configuration, a node $w$, and an edge $e$ incident to $w$, we denote by $w_e$ the connection point of $w$'s gadget
that is connected to a connection point of $e$'s gadget.
We define the \emph{extended version} of $w_e$ as the set of vertices containing $w_e$ and all the vertices of $e$'s gadget that are reachable from $w_e$ without traversing vertices with a token of color $4$.

Consider a generic swap involving the tokens of color $3$ and $4$ in an edge gadget representing some edge $e = (u, v)$, and assume (w.l.o.g.) that $e$'s gadget flips from being directed towards $u$ before the swap to being directed towards $v$ after the swap.
Changing the direction of $e$'s gadget requires interacting with token of color $2$, which is moved from the extended version of $u_e$ to the extended version of $v_e$ as a result of the flip.

If $u$ is an OR node, then $u$'s gadget had at least two incident edge gadgets directed towards $u$ before the swap (see Section~\ref{sub:or_gadget}), which implies that flipping $e$ still results in a \ncl configuration in which $u$ has an incoming edge, showing that the configuration is valid.

If $u$ is an AND node, we distinguish two cases depending on whether $e$ is light or heavy.
If $e$ is light, then $u_e$ is either $u_2$ or $u_4$, and the flip of $e$'s gadget must have been enabled by some token of color $2$ initially into the extended version of $u_e$.
This implies that, immediately after the flip, $u$'s gadget cannot be in the up state, and hence it contains a token of color $2$ in $u_6$. This is only possible if the edge gadget connected to $v_5$, which represents a heavy edge, is directed towards $u$. Then, in the resulting \ncl configuration, node $u$ has an incoming heavy edge, hence such configuration is valid.
If $e$ is heavy, then $u_e$ is $u_5$ and the only way to flip $e$'s gadget from pointing towards $u$ to pointing towards $v$ is for a token of color $2$ to be moved from $u_e$'s extended version to $v_e$'s extended version.
This can only happen if the token of color $3$ is $u$'s gadget is placed in $v_6$, i.e., if $u$'s gadget is in the up state. This implies that the edge gadgets of the two light edges incident to $u$ are directed towards $u$. Then, after the swap, node $u$ has two incoming light edges, showing that the resulting \ncl configuration is valid.

\subparagraph{Extending the result to cubic planar base graphs.}
The arguments above show the \pspace-completeness of \ccts when the swap graph is a path on $4$ vertices and the base graph is planar and subcubic. We can extend the above result to cubic graphs by ``padding'' the edge gadget and the AND gadgets with dummy vertices and edges, which do not alter the operation of the gadgets but ensure that all the vertex degrees are exactly $3$.\footnote{The OR gadgets require no special care, since all their vertices already have degree $3$.}
Two configurations of dummy vertices allowing to increase the degree of a vertex from $1$ to $3$ and from $2$ to $3$ are shown in Figure~\ref{fig:cubic}~(a) and (b), respectively.
Figure~\ref{fig:cubic}~(c) and (d) shows the padded versions of the edge gadget and the AND gadget, respectively. Notice that, while it is possible for a token of color $2$ to swap with the additional dummy vertices of color $1$, this is useless as it does not enable any additional swap involving non-dummy vertices, and a token of color $2$ in a dummy vertex can only return to the same non-dummy vertex it left.
While the padded gadgets are no longer outerplanar, they are still planar and all connection points can be made incident to the outer face of an embedding (as it is the case for the embeddings of Figure~\ref{fig:cubic}), hence the planarity of the resulting base graph is preserved.

\begin{figure}[t]
    \centering
    \includegraphics[scale=0.76]{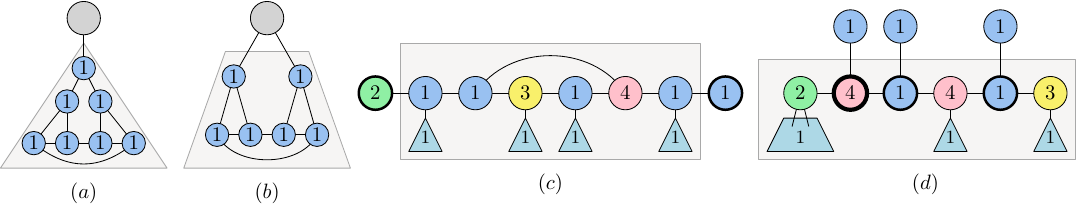}
    \caption{(a) and (b): Two configurations of dummy vertices having degree $3$ and containing tokens of color $1$ which allow to increase the degree of a vertex $v$ (in gray) from either $1$ (in (a)) or $2$ (in (b)) to $3$. (c) and (d): Padded versions of the edge gadget in the configuration shown in Figure~\ref{fig:edge_gadget}~(a), and of the AND gadget in the configuration of Figure~\ref{fig:and_gadget}~(a), respectively. Here, a triangle represents a padding with dummy vertices as arranged shown in (a), while a trapezoid represents the padding as shown in (b).}
    \label{fig:cubic}
\end{figure}

\noindent We have have established the main result of this section, which is stated in the following:

\begin{theorem}
    The \ccts problem is \pspace-complete even when restricted to swap graphs that are paths on $4$ vertices and to cubic planar base graphs.
\end{theorem}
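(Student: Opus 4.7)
The plan is to reduce from the configuration-to-configuration \ncl problem on cubic planar AND/OR constraint graphs, which is \pspace-complete \cite{HearnD05}; membership in \pspace follows from \cite{YangZ25}. I would build \ccts gadgets over the path swap graph $1$-$2$-$3$-$4$ that replace each edge, OR node, and AND node of the \ncl instance so that \ncl edge orientations correspond to positions of the tokens of colors $3$ and $4$ inside the edge gadgets, and so that the only inter-gadget communication is carried by tokens of colors $1$ and $2$.

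The key design idea is that a color-$2$ token should serve as the unique ``signal'' that enables an edge gadget to flip, with colors $3$ and $4$ confined to their home gadget. For the edge gadget I would use a short path with a single chord containing a $3$-$4$ pair whose transposition encodes the flip, triggered only when a color-$2$ token enters from the current head side. For the OR gadget I would take a triangle on three connection points, maintaining the invariant that the number of incoming incident edges exceeds by one the number of color-$2$ tokens currently inside, which automatically enforces that at least one incident edge points inward. For the AND gadget I would use a path on six vertices with an ``up/down state'' toggled by sliding a color-$3$ token between the two endpoints; this sliding operation mechanically couples the direction of the heavy edge with the directions of the two light edges in exactly the way the AND constraint demands, because the color-$3$ token can only pass through a vertex adjacent to color-$2$ or color-$4$ tokens.

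For correctness, the forward direction is a direct simulation: each \ncl flip is realized by extracting a color-$2$ token from the tail gadget, traversing the edge gadget with it, and possibly retoggling the head AND gadget between up and down. The reverse direction is the delicate part: I interpret every $3$-$4$ swap inside an edge gadget as a flip of the corresponding \ncl edge and verify that endpoint constraints are preserved, by a case analysis on whether the endpoint is AND or OR and whether the incident edge is light or heavy, reading off from the local gadget invariants which other incident edges must currently point inward. The main obstacle I anticipate is ruling out exploits via partial or intermediate AND-gadget configurations, which I would handle by showing that any partially transitioned AND state is strictly more restrictive than the down state, and so cannot be used to certify an invalid \ncl flip.

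Finally, to upgrade from maximum degree $3$ to exactly $3$ while preserving planarity, I would pad each vertex of degree $1$ or $2$ with a small tree of color-$1$ dummy vertices attached on the outer face of a suitable planar embedding of each gadget. Such dummies are operationally inert, since a color-$2$ token that enters a dummy tree cannot leave with any effect on the rest of the graph: it can only swap with color-$1$ tokens and must return to the same non-dummy vertex it left, so no new $3$-$4$ swap is ever enabled. Combining the hardness reduction with the \pspace upper bound of \cite{YangZ25} then yields the theorem.
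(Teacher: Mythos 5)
Your proposal follows essentially the same route as the paper: a reduction from configuration-to-configuration \ncl with an edge gadget built from a short path plus a chord hosting the $3$--$4$ pair, a triangle OR gadget with the same counting invariant on color-$2$ tokens, a six-vertex AND gadget toggled by sliding the color-$3$ token, the same two-directional correctness argument (including dismissing intermediate AND states as more restrictive than the down state), and the same degree-padding with inert color-$1$ dummy structures to reach a cubic planar base graph. The approach and all key design choices match the paper's proof.
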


\section{A Polynomial-Time Algorithm for Star Swap Graphs}
\label{sec:poly}

In this section we describe a polynomial-time algorithm for solving the \ccts problem when the swap graph is a star, which we denote by \cctss for short. This also shows that \cctss is polynomial-time solvable when the number of colors is $3$ since the only connected non-complete swap graph on three vertices is a star. W.l.o.g., we adopt the following assumptions: (i) the color of the center of the swap graph is $1$, and (ii) the tokens of color $1$ occupy the same set of vertices in both the initial and the final configuration of \cctss. Indeed, if (ii) is not immediately satisfied, one can always move the tokens of color $1$ to their final positions, disregarding the positions of the other tokens, and consider the instance that asks to transform the resulting configuration into the sought final configuration. Since swaps are reversible, this cannot turn a yes-instance into a no-instance, and vice versa.

Observe that a token of color $1$ can swap with a token of any other color, and that no other swaps are possible.
Hence, such an instance can be equivalently thought of as a \emph{sliding token} problem on graphs, where vertices occupied by tokens of color $1$ are considered to be \emph{blank}, i.e., devoid of tokens.
Then, the only surviving tokens have colors other than $1$, and a move consists in sliding any such token along an edge from its current vertex to neighboring blank vertex. The problem asks to decide whether, given an initial and a final configuration, the latter can be reached from the former via sequence of moves.
When all tokens of such an instance have different colors, this is exactly the \emph{Pebble Motion on Graphs} (\pmg) problem, which is a graph generalization of the 15-puzzle problem with one or more empty positions, and has been studied by Kornhauser, Miller, and Spirakis in \cite{KornhauserMS84}.\footnote{Many technical details and proofs are omitted from \cite{KornhauserMS84} and can be found in the Master's thesis by Kornhauser \cite{kornhauser1984coordinating}, on which \cite{KornhauserMS84} is based.}

In particular, \cite{KornhauserMS84} provided a characterization of the solvable instances of \pmg, which is checkable in polynomial-time. Clearly, one can check whether an \cctss instance is solvable by testing all possible \emph{color-preserving} bijections that map each token of color $c \neq 1$ in the initial configuration to a token of color $c$ in the final configuration. Each such color-preserving bijection corresponds to an instance of \pmg, whose solvability is tested using the result from \cite{KornhauserMS84}. 
However, this naive approach may require testing a superexponential number of bijections in the worst case. In the following we extend the technique employed by \cite{KornhauserMS84} for \pmg to solve \cctss in polynomial time.

We reuse a key ingredient by \cite{KornhauserMS84}, which is an equivalence relation $\sim$ that allows us to partition the vertices occupied by tokens in a \cctss instance into equivalence classes. These equivalence classes depend only on 
which set of vertices contain tokens of color other than $1$, but not on the specific token colors.
Let $\B \subseteq V(G)$ be the set of blank vertices in a graph $G = (V, E)$, and define the relation $\sim$ on the vertices not in $\B$ as follows: for $u,v \in V \setminus \B$,  vertex $u$ is related to $v$, i.e., $u \sim v$, iff there exists a sequence of swaps that brings the token initially on $u$ to vertex $v$ without changing the resulting set of blank vertices.
Moreover, for a vertex $u \in V \setminus \B$, we say that a vertex $v \in V$ is \emph{reachable} by $u$ if there exists a sequence of swaps that moves the token initially on $u$ to vertex $v$ (without necessarily preserving the set of blank vertices in the resulting configuration). We denote by $R(u)$ the set of all vertices $v$ reachable by $u$ and, for a set of vertices $U \subseteq V \setminus \B$, we let $R(U) = \bigcup_{u \in U} R(u)$.

\begin{lemma}
 $\sim$ is an equivalence relation on $V \setminus \B$.
\end{lemma}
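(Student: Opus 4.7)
The plan is to verify reflexivity, symmetry, and transitivity of $\sim$. Reflexivity is immediate from the empty sequence of swaps, which preserves the blank set and fixes every token. The main obstacle will be symmetry: the definition of $u \sim v$ requires a swap sequence applied \emph{from the given initial configuration}, so one cannot simply reverse the witness sequence $\sigma$, since that reversal would naturally start from the configuration obtained after $\sigma$ was applied, not from the original one.

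Before tackling symmetry and transitivity, I would isolate a reusable invariance property of the sliding-token setting: applicability of a swap $(a,b)$ depends only on which of $a,b$ is blank, not on the colors of tokens. Hence, once a sequence $\sigma$ of vertex pairs is fixed, the entire sequence of blank sets traversed during its execution is determined by the initial blank set alone. In particular, if $\sigma$ is valid from some configuration $C$ and ends in a configuration with the same blank set $\B$ as $C$, then $\sigma$ is valid from \emph{every} configuration with blank set $\B$, and it again returns the blank set to $\B$. Moreover, the permutation $\pi_\sigma$ of $V \setminus \B$ induced by $\sigma$ depends only on $\sigma$ and on $\B$, not on the specific token colors.

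For symmetry, let $\sigma$ witness $u \sim v$, so $\pi_\sigma(u) = v$. By the invariance, $\sigma$ can be iterated $k$ times starting from the initial configuration: each run is valid and preserves $\B$, and the cumulative permutation is $\pi_\sigma^k$. Since the symmetric group on $V \setminus \B$ is finite, $\pi_\sigma$ has some finite order $o$, so $\pi_\sigma^{o-1} = \pi_\sigma^{-1}$. The sequence $\sigma^{o-1}$ thus preserves the blank set and moves the token originally on $v$ to $\pi_\sigma^{-1}(v) = u$, witnessing $v \sim u$.

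For transitivity, suppose $\sigma$ witnesses $u \sim v$ and $\tau$ witnesses $v \sim w$. The invariance shows that $\tau$ remains valid when applied after $\sigma$ and again returns the blank set to $\B$. The concatenation $\sigma\tau$ is therefore a valid swap sequence from the initial configuration that preserves $\B$ and induces the permutation $\pi_\tau \circ \pi_\sigma$, which sends $u$ to $\pi_\tau(v) = w$; hence $u \sim w$.
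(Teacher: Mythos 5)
Your proof is correct, and its skeleton (reflexivity via the empty sequence, transitivity via concatenation, both resting on the observation that applicability of a swap sequence and the permutation it induces depend only on the blank set) matches the paper's. The one step you handle differently is symmetry: the paper takes the witness sequence $S$, reverses it to $S^{-1}$, notes that $S^{-1}$ undoes $S$ when applied to the final arrangement $B$, and then applies $S^{-1}$ to the initial arrangement $A$ instead (legitimate precisely because $A$ and $B$ have the same blank set), obtaining a witness for $v \sim u$. You instead realize $\pi_\sigma^{-1}$ as $\pi_\sigma^{o-1}$ where $o$ is the order of $\pi_\sigma$ in the symmetric group, and iterate $\sigma$ accordingly. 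Both arguments lean on exactly the same invariance principle, which you state explicitly and the paper uses only implicitly (``which is possible since the set of blank vertices is the same''); making it explicit is a genuine improvement in rigor, since it is the crux of both the symmetry and transitivity steps. The paper's reversal gives a witness of the same length as $S$ and is slightly more direct; your powering argument produces a potentially much longer witness (up to $o-1$ copies of $\sigma$) but sidesteps having to argue that the reversed edge list is itself a valid sequence that inverts the permutation. Neither difference affects correctness, since $\sim$ only asks for existence of a witness.
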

\begin{proof}
    Clearly, $u \sim u$ for all $u \in V \setminus \B$ as it suffices to consider the empty sequence of swaps, hence $\sim$ is reflexive.

    To see that $\sim$ is symmetric, i.e., that $u \sim v$ implies $v \sim u$, let $A$ be the initial arrangement of tokens, 
    consider a sequence $S$ of swaps that places the token in $u$ into $v$, while preserving the set of blank vertices, and let $B$ be the resulting arrangement of tokens. In particular, we can think of $S$ as an ordered list of edges (possibly with repetitions) where each edge describes the swap of the two tokens residing on its endvertices, where exactly one of the two endvertices is blank.
    Let $S^{-1}$ be the sequence obtained by reversing the order of the edges in $S$. 
    If we were to apply $S^{-1}$ to $B$, this would result in configuration $A$. Hence, applying $S^{-1}$ from configuration $A$ (which is possible since the set of blank vertices is the same), moves the token on $v$ to vertex $u$ preserving the set of blank vertices.

    Finally, we show that $\sim$ is transitive. If $u \sim v$ and $v \sim w$, then let $S_1$ (resp.\ $S_2$) be a sequence of swaps that moves the token in $u$ (resp.\ $v$) to vertex $v$ (resp.\ $w$) without changing the set of blank vertices. The sequence obtained by concatenating $S_1$ with $S_2$ places the token initially in $u$ into $w$ without changing the set of blank vertices.
\end{proof}

Let $\mathcal{C} = (V \setminus \B) / \sim$ be the set of equivalence classes of $\sim$.
For each class $C \in \mathcal{C}$, we consider the subgraph $G_C = (V_C, E_C)$ of $G$ induced by the vertices in $C \cup R(C)$, and we define a sub-instance $\mathcal{I}_C$ of $\cctss$ in which the swap graph is the same. 
In the initial (resp.\ final) configuration, the vertices in $C$ contain a token with the same color as in the initial (resp.\ final) configuration of the \cctss instance.  Vertices in $R(C) \setminus C$ contain tokens of color $1$ in both the initial and in the final configuration. See Figure~\ref{fig:transitive} for an example.
Observe that it is possible for a resulting instance to have a different number of tokens of the same color in the initial and final configurations, in which case it is trivially a no-instance.

\begin{figure}
    \centering
    \includegraphics{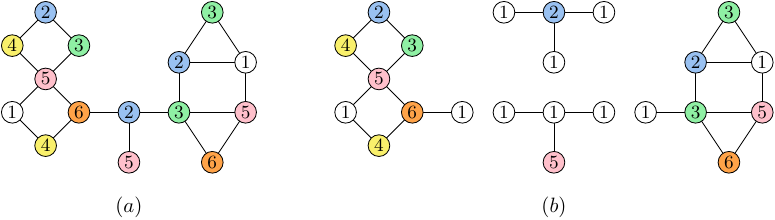}
    \caption{(a) The initial configuration of an instance  $\mathcal{I}$ of \cctss (in which the center of the swap graph is color $1$). (b) The initial configurations of the transitive sub-instances $\mathcal{I}_C$ of $\mathcal{I}$.}
    \label{fig:transitive}
\end{figure}

We say that an instance of \cctss is \emph{transitive} if it has only one equivalence class w.r.t.\ $\sim$. 
\cite{KornhauserMS84} showed how to compute both $\mathcal{C}$ and all $R(C)$ for $C \in \mathcal{C}$ in polynomial time, and showed that instances in which the vertices in $C$ are not blanks, while those in $R(C) \setminus C$ are blanks, are transitive. Then, we have the following: 
\begin{lemma}[\cite{KornhauserMS84}]
    For all equivalence classes $C \in \mathcal{C}$, the instance $I_C$ is transitive.
\end{lemma}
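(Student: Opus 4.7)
The plan is to invoke the characterization of solvable pebble motion instances from \cite{KornhauserMS84}. The key observation that enables this reduction is that $\sim$ is color-blind: whether a swap sequence preserves the blank set depends only on the graph and on which vertices are blank, not on the specific colors of the tokens sitting on the non-blank vertices. Hence the equivalence classes of $\mathcal{I}_C$ under $\sim$ coincide with those of the \pmg instance obtained by placing a distinct pebble on each vertex of $C$ and leaving $R(C) \setminus C$ blank on the graph $G_C$, and it suffices to prove transitivity for this \pmg instance.

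First I would verify that the blank structure of $\mathcal{I}_C$ agrees with the one inherited from $\mathcal{I}$, namely that $R(C) \setminus C \subseteq \B$. A vertex $w \in R(C) \setminus C$ admits, by definition, a swap sequence in $G$ that brings some token initially in $C$ onto $w$. If $w$ were non-blank in $\mathcal{I}$, a modification of this sequence (in the spirit of the reversal argument used to show that $\sim$ is symmetric) would give a sequence preserving $\B$ that moves a token from $C$ onto $w$, forcing $w \sim u$ for some $u \in C$ and hence $w \in C$, a contradiction. This in particular ensures that the sub-instance $\mathcal{I}_C$ inherits the same pattern of blanks in $V_C$ that $\mathcal{I}$ has, so reachability computed inside $G_C$ is well-defined.

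Next I would apply the characterization of \cite{KornhauserMS84} directly: in the \pmg instance on $G_C$ with blanks $R(C) \setminus C$ and distinct pebbles on $C$, all pebbles belong to a single equivalence class. The core of Kornhauser's argument, which I would invoke as a black box, is that any swap sequence in $G$ witnessing $u \sim v$ for $u,v \in C$ can be simulated using only edges of $G_C$: tokens originating in $C$ cannot leave $R(C) = V_C$, and the blanks in $R(C) \setminus C$ suffice to realize any needed rearrangement internally.

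The main obstacle is precisely this simulation step. A swap sequence in $G$ may route tokens through $V \setminus V_C$ for intermediate bookkeeping, and showing that such detours can always be replaced by swaps confined to $V_C$ requires the careful structural analysis of reachability and of ``blank-shuffling'' developed in Kornhauser's thesis. Once that is accepted, the lemma follows immediately from the color-blind reduction sketched above; a fully self-contained proof would need to reproduce the combinatorial confinement argument of \cite{KornhauserMS84,kornhauser1984coordinating}.
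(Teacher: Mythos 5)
Your approach matches the paper's: the paper gives no proof of this lemma and simply imports it from \cite{KornhauserMS84}, after noting (as you do) that $\sim$ and $R(\cdot)$ are color-blind, so the statement reduces to the corresponding fact for \pmg with distinct pebbles on $C$ and blanks on $R(C)\setminus C$. The extra steps you flag (that $R(C)\setminus C$ consists of blanks and that witnessing swap sequences can be confined to $G_C$) are precisely the content the paper also defers to \cite{KornhauserMS84,kornhauser1984coordinating}, so nothing is missing relative to what the paper itself establishes.
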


As discussed above, whenever a \cctss instance $\mathcal{I}$ has exactly one token for each color other than $1$, $\mathcal{I}$ is equivalent to a \pmg instance. As a consequence, its transitive sub-instances $\mathcal{I}_C$ of $\mathcal{I}$ are exactly those computed by \cite{KornhauserMS84} for \pmg, from which we state the following result, adapted to our nomenclature.

\begin{lemma}[\cite{KornhauserMS84}, rephrased]
    \label{lemma:ptg_transitive}
    Consider a transitive instance $\mathcal{I}$ of \cctss with exactly one token of each color other than $1$, and let $\mathcal{C}$ be the set of its equivalence classes w.r.t.\ $\sim$. Instance $\mathcal{I}$ is solvable if and only if all instances $I_C$, for $C \in \mathcal{C}$, are solvable.
\end{lemma}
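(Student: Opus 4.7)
The plan is to prove each direction of the biconditional separately, leveraging the structural properties of the equivalence relation $\sim$ and the reachability sets $R(C)$.

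For the forward direction, I would start from a swap sequence $S$ that solves $\mathcal{I}$ and, for each class $C \in \mathcal{C}$, extract a subsequence $S_C$ consisting of the swaps whose endpoints lie in $R(C)$. By definition of reachability, every non-$1$ token originating in $C$ stays within $R(C)$ throughout the execution of $S$, so $S_C$ correctly tracks their motion. Any non-$C$ token that enters or leaves $R(C)$ during $S$ can be simulated by a color-$1$ (blank) token in $\mathcal{I}_C$, since the vertices of $R(C)\setminus C$ are all blank there. A short verification then shows that $S_C$ is a valid swap sequence for $\mathcal{I}_C$ whose final configuration matches the desired one.

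For the backward direction, suppose each $\mathcal{I}_C$ is solved by some sequence $S_C$. I would construct a global solution by processing the classes in an appropriate order: to handle $C$, first reorganize the tokens of $\mathcal{I}$ so that all vertices of $R(C)\setminus C$ carry color-$1$ tokens, then execute $S_C$ to place the tokens of $C$ at their targets. The preparatory rearrangement is possible because tokens belonging to other classes can be temporarily shuffled within their own reachability subgraphs, using $\sim$-moves that by construction neither alter the blank set nor displace tokens in $C$. Iterating over all classes then yields a solution to $\mathcal{I}$.

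The hard part will be the non-interference argument in the backward direction. One must show that (i) the preparatory rearrangement needed before processing $C$ never undoes previously-placed tokens of another class $C'$, and (ii) executing $S_C$ inside the global instance faithfully mimics its behavior in $\mathcal{I}_C$, despite the fact that what $\mathcal{I}_C$ treats as interchangeable blanks may, in $\mathcal{I}$, participate in other sub-instances and eventually need to return to their original vertices. Since the lemma is explicitly a rephrasing of a result of \cite{KornhauserMS84}, the core technical work is carried by their structural analysis of how the equivalence classes of $\sim$ decompose the solvable permutations on transitive sub-instances, which I would invoke directly rather than re-derive.
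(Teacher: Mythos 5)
The paper does not prove this lemma at all: it is imported verbatim (modulo nomenclature) from \cite{KornhauserMS84}, with the preceding text merely observing that a \cctss instance with one token per non-center color \emph{is} a \pmg instance, so the decomposition result of \cite{KornhauserMS84} applies directly. Your proposal is therefore being compared against a citation, not a proof. Taken on its own terms, your sketch is a reasonable outline: the forward direction can indeed be made rigorous (a swap of $S$ with one endpoint outside $R(C)$ can never move a token of $C$, since $R(C)$ is closed under reachability, so such swaps may be dropped; swaps internal to $R(C)$ involving only non-$C$ tokens and blanks become no-ops in $\mathcal{I}_C$; the remaining swaps are legal there because $\mathcal{I}_C$ has at least as many blanks on $R(C)\setminus C$ as the global instance ever does). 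The backward direction, however, is where your argument is not self-contained: the claims that one can always make all of $R(C)\setminus C$ simultaneously blank without disturbing already-placed tokens of other classes, and that the classes can be processed in some order without interference, are exactly the nontrivial structural content of \cite{KornhauserMS84} --- and you acknowledge this by ultimately invoking that reference rather than re-deriving it. That is a legitimate move here, and it matches what the paper itself does (wholesale), so the proposal is acceptable; just be aware that the "preparatory rearrangement" step as written is an assertion, not a proof, and would need the cut-vertex/biconnected-component analysis of \cite{KornhauserMS84} to be justified.
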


\noindent We can show that a similar result applies to all instances of \ccts.

\begin{lemma}
 Consider a transitive instance $\mathcal{I}$ of \cctss and let $\mathcal{C}$ be the set of its equivalence classes w.r.t.\ $\sim$. Instance $\mathcal{I}$ is solvable if and only if all sub-instances $I_C$, for $C \in \mathcal{C}$, are solvable.
\end{lemma}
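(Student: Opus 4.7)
The plan is to show that both directions of the biconditional hold by identifying $V \setminus V_C$ as a collection of ``dead'' blank vertices that any solution can safely ignore. Since $\mathcal{I}$ is transitive, the unique equivalence class is $C = V \setminus \B$, so $V_C = C \cup R(C) = R(C)$ and $V \setminus V_C = \B \setminus R(C)$. I would first establish, by induction on the length of any swap sequence starting from $\mathcal{I}$'s initial configuration, the invariant that every non-blank token is always located on a vertex of $V_C$. The base case is immediate since initially all non-blank tokens sit on $C \subseteq V_C$; for the inductive step, a swap moves a non-blank token to an adjacent vertex, which becomes reachable from $C$ by definition and hence belongs to $R(C) = V_C$.

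For the forward direction, given any solution $S$ of $\mathcal{I}$, every swap in $S$ involves one blank and one non-blank endpoint, since the swap graph is a star centered at color~$1$. By the invariant, the non-blank endpoint lies in $V_C$, and the blank endpoint does too (because right after the swap it holds a non-blank token, so it belongs to $R(C)$). Hence every swap in $S$ uses only edges of $E_C$, and $S$ is therefore a valid sequence for $\mathcal{I}_C$. Since assumption~(ii) guarantees that the set of blank vertices is preserved between $\mathcal{I}$'s initial and final configurations, and $V \setminus V_C \subseteq \B$, the initial and final configurations of $\mathcal{I}_C$ agree with those of $\mathcal{I}$ restricted to $V_C$, so $S$ solves $\mathcal{I}_C$ as well.

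For the backward direction, any solution $S_C$ of $\mathcal{I}_C$ can be applied verbatim to $\mathcal{I}$: all its swaps use edges of $G$, and the vertices of $V \setminus V_C$ are blank in $\mathcal{I}$'s initial configuration and are never touched by $S_C$, so they remain blank throughout. The resulting final configuration matches $\mathcal{I}$'s desired one on $V_C$ by the correctness of $S_C$, and on $V \setminus V_C$ because these vertices stay blank, as required by assumption~(ii). The only subtle point, which I expect to be the main (mild) obstacle, is the invariant above; it hinges on the fact that $R(C)$ is by definition exactly the set of positions that a non-blank token can ever occupy starting from $C$, and on the observation that, since the swap graph is a star centered at color~$1$, no swap can take place at two simultaneously-blank vertices.
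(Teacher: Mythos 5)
Your proof reads the statement literally: since $\mathcal{I}$ is transitive, $\mathcal{C}$ consists of the single class $C=V\setminus\B$, and the lemma reduces to showing that the vertices of $\B\setminus R(C)$ are irrelevant. For that reading your argument is correct and essentially complete: the invariant that tokens of color other than $1$ never leave $R(C)$ holds by the very definition of $R$, every swap has exactly one blank endpoint because the swap graph is a star centered at color $1$, and assumption~(ii) makes the initial and final configurations of $\mathcal{I}$ restricted to $V_C$ coincide with those of $\mathcal{I}_C$. This is far more elementary than what the paper does.

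It is, however, not the proof the paper gives, and it misses what the lemma is actually for. The paper's own proof manifestly treats $\mathcal{C}$ as containing several classes (it assembles a global color-preserving bijection $\mu$ from per-class bijections $\mu_C$ and quantifies over all $C\in\mathcal{C}$); the word ``transitive'' in the statement is evidently a slip, since this lemma is precisely the decomposition step needed for \emph{non}-transitive instances, and under your reading it carries almost no content. In the multi-class setting the backward direction is the crux, and your argument does not survive there: the vertex sets $V_C=C\cup R(C)$ of distinct sub-instances can overlap (different classes may share reachable blank vertices, and it is not obvious that $R(C)$ avoids the tokens of other classes), so the solutions $S_C$ cannot be applied verbatim to $\mathcal{I}$, nor simply concatenated, without first proving non-interference. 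The paper sidesteps exactly this difficulty by converting each per-class solution into a color-preserving bijection, gluing these into a single instance $\mathcal{I}'$ with pairwise distinct token colors, and invoking Lemma~\ref{lemma:ptg_transitive} (the Kornhauser--Miller--Spirakis decomposition for pebble motion) to recombine the pieces. Your proposal has no counterpart to that recombination step, which is the genuinely nontrivial part of the lemma as the paper uses it.
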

\begin{proof}
    Suppose that $\mathcal{I}$ is solvable, consider a sequence $S$ of swaps solving $\mathcal{I}$, and define the color-preserving bijection $\mu$ induced by $S$.
    More precisely, if the sequence $S$ of swaps moves a token from vertex $u$ in the initial configuration to vertex $v$ in the final configuration, then $\mu$ maps the token in $u$ in the initial configuration to that in $v$ in the final configuration. 
    The bijection $\mu$ induces an instance $\mathcal{I}'$ with exactly one token of each color other than $1$, which is solvable by sequence $S$. Then, Lemma~\ref{lemma:ptg_transitive} implies that all transitive sub-instances $\mathcal{I}'_C$ of $\mathcal{I}'$ are solvable.
    Then, since for each class $C \in \mathcal{C}$, the instance $\mathcal{I}'_C$ is the one induced by a color-preserving bijection of $\mathcal{I}_C$, the solvability of $\mathcal{I}'_C$ implies that $\mathcal{I}_C$ is also solvable.

    Suppose now that all sub-instances $\mathcal{I}_C$ for $C \in \mathcal{C}$ are solvable, let $\mu_C$ be the color-preserving bijection induced by a solution to $\mathcal{I}_C$, and let $\mathcal{I}'_C$ the corresponding \cctss instance with exactly one token of each color other than $1$ (which is solvable).
    Consider the function $\mu$ that maps a generic token $t$, initially sitting on a vertex belonging to equivalence class $C$, to $\mu_C(t)$.
    Since $\mu_C(t)$ must be placed on a vertex in $C$, the function $\mu$ is a bijection, which is color-preserving by construction.  

    Consider the instance $\mathcal{I}'$ induced by $\mu$. By construction, the instances $\mathcal{I}'_C$ are exactly the transitive sub-instances of $\mathcal{I}'$.
    Therefore, by Lemma~\ref{lemma:ptg_transitive}, $\mathcal{I}'$ is solvable, which implies that $\mathcal{I}$ is also solvable.
\end{proof}

\subsection{Solving transitive instances}

\begin{figure}
    \centering
    \includegraphics{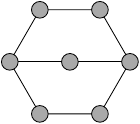}
    \caption{The graph $T_0$ mentioned in the statements of Theorem~\ref{thm:generalized_15_puzzle} and Lemma~\ref{lemma:ccts_p}.}
    \label{fig:t0}
\end{figure}

In \cite{KornhauserMS84}, the authors provide a characterization of the solvable transitive instances of \pmg, which can be checked in polynomial time. We rephrase such characterization \cctss instances where there is exactly one token of each color other than $1$.
\begin{theorem}[\cite{KornhauserMS84}, rephrased]
    \label{thm:generalized_15_puzzle}
    Consider a transitive instance $\mathcal{I}$ of \cctss with exactly one token of each color other than $1$ having a base graph $G$ that is neither a cycle nor the graph $T_0$ in Figure~\ref{fig:t0}.
    Such an instance is solvable in polynomial time and it is a no-instance if and only if
    all of the following conditions hold: (i) there is exactly one token of color $1$, (ii) $G$ is biconnected and bipartite, and (iii) the permutation induced by the initial and final positions of the tokens is odd.\footnote{Each permutation $\pi$ can be written as a product of transpositions, i.e., exchanges of two distinct elements. More precisely, each permutation $\pi$ can be written as either a product of an even number of transpositions, or as a product of an odd number of transpositions, but not both, and the \emph{parity} of $\pi$ is defined as the parity of the number of such transpositions.}
\end{theorem}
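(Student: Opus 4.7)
The plan is to reduce Theorem~\ref{thm:generalized_15_puzzle} directly to the classical characterization of Kornhauser, Miller, and Spirakis~\cite{KornhauserMS84} for the Pebble Motion on Graphs problem. First, I would observe that a \cctss instance in which every color other than $1$ has exactly one token is, verbatim, an instance of \pmg once we reinterpret the tokens of color~$1$ as blank positions. Indeed, the swap graph is a star centered at color~$1$, so each admissible swap is between a color-$1$ token and a token of some other color; this corresponds exactly to sliding a distinguishable pebble along an edge into a neighboring blank vertex. Moreover, the equivalence relation $\sim$ is defined purely in terms of which vertices currently contain a color-$1$ token and not in terms of the specific non-$1$ colors, so the notion of a transitive instance in our setting coincides with the one used in~\cite{KornhauserMS84}.

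Having set up this equivalence, I would invoke the KMS classification for transitive \pmg instances. Their theorem states that on a base graph $G$ that is neither a cycle nor $T_0$, solvability can be decided in polynomial time, and an instance is unsolvable precisely when (a)~there is exactly one blank, (b)~$G$ is biconnected and bipartite, and (c)~the permutation mapping initial pebble positions to final pebble positions has odd parity. Condition (a) translates verbatim to our condition (i), condition (b) to our condition (ii), and condition (c) to our condition (iii); the last translation is well-posed because, having exactly one token per non-$1$ color, there is a unique color-preserving bijection between initial and final tokens, and its parity is exactly the quantity to which (c) refers.

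For the polynomial-time bound, I would note that bipartiteness and biconnectivity are checkable in linear time via standard graph traversals, and the parity of a permutation on $n$ elements can be read off from its cycle decomposition in $O(n)$ time; thus each of the three conditions (i)--(iii) can be tested efficiently. The only subtlety I anticipate is ensuring that the two exceptional graphs, cycles and $T_0$, really behave identically in our framework and in that of~\cite{KornhauserMS84}. This should be essentially immediate because the dynamics of a sequence of swaps depend only on the evolving set of blank vertices together with the labels of the non-$1$ tokens, and the argument of KMS isolating these exceptional graphs goes through with no modification once the blank-versus-pebble translation above is in place.
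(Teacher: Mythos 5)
Your proposal is correct and matches the paper's treatment: the paper states this theorem as a direct rephrasing of the Kornhauser--Miller--Spirakis characterization for \pmg, justified by the same observation you make, namely that a \cctss instance with exactly one token of each color other than $1$ is precisely a \pmg instance once color-$1$ tokens are read as blanks. The paper offers no further proof beyond this translation, so your spelled-out version of the correspondence (including the handling of the exceptional graphs and the parity condition) is the same argument, just made explicit.
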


Using the above characterization, we can state the following criterion for deciding transitive \cctss instances when multiple tokens may have the same color.
\begin{lemma}
    \label{lemma:ccts_p}
    Consider a transitive instance $\mathcal{I}$ of \cctss with a base graph $G$ that is neither a cycle nor the graph $T_0$ in Figure~\ref{fig:t0}.
    Such an instance is solvable in polynomial time and it is a no-instance if and only if
    all of the following conditions hold: (i) all tokens have distinct colors, (ii) $G$ is biconnected and bipartite, and (iii) the permutation induced by the initial and final positions of the tokens (which is well-defined due to (i)) is odd.
\end{lemma}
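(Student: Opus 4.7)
The plan is to reduce Lemma~\ref{lemma:ccts_p} to Theorem~\ref{thm:generalized_15_puzzle} by exploiting the freedom in choosing which token of a repeated color is sent to which target position. Given any color-preserving bijection $\mu$ from the initial token set of $\mathcal{I}$ to its final token set, let $\mathcal{I}^\mu$ denote the instance obtained by relabeling tokens according to $\mu$ so that every color appears at most once. Every sequence of swaps solving $\mathcal{I}$ induces such a $\mu$, and conversely every solution of some $\mathcal{I}^\mu$ is also a solution of $\mathcal{I}$; hence $\mathcal{I}$ is solvable iff some color-preserving $\mu$ makes $\mathcal{I}^\mu$ solvable. Moreover $\mathcal{I}^\mu$ is transitive on the same base graph $G$, since the relation $\sim$ depends only on the set of blank vertices, so Theorem~\ref{thm:generalized_15_puzzle} applies to it.

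First I would dispatch the case where condition~(i) of the lemma holds, i.e., all tokens of $\mathcal{I}$ already have distinct colors. Then $\mu$ is unique and $\mathcal{I}^\mu = \mathcal{I}$; moreover, distinctness forces color~$1$ to have exactly one token, so the three no-conditions of Theorem~\ref{thm:generalized_15_puzzle} collapse into conditions~(ii) and~(iii) of the lemma. This handles the ``only if'' direction as well as the ``if'' direction under condition~(i).

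Next I would argue that if condition~(i) fails -- some color has at least two tokens -- then $\mathcal{I}$ is always a yes-instance. There are two sub-cases. If color~$1$ itself has two or more tokens, then for every $\mu$ the instance $\mathcal{I}^\mu$ violates the first no-condition of Theorem~\ref{thm:generalized_15_puzzle}, hence is solvable. If color~$1$ appears exactly once but some color $c \neq 1$ has at least two tokens, pick two of them: the two color-preserving bijections $\mu$ and $\mu'$ that differ only by transposing the targets of these two tokens induce permutations of the non-blank positions that differ by a single transposition, hence have opposite parity. Choosing $\mu$ so that the induced permutation is even, Theorem~\ref{thm:generalized_15_puzzle} guarantees that $\mathcal{I}^\mu$, and therefore $\mathcal{I}$, is solvable.

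The polynomial-time claim is then immediate: condition~(i) is a direct counting check, condition~(ii) uses standard tests for biconnectivity and bipartiteness on $G$, and condition~(iii), when it applies, follows from the cycle decomposition of the (then well-defined) permutation. The main delicate point is the parity swap argument: I need to confirm that swapping the $\mu$-images of two same-colored tokens genuinely changes the parity of the induced permutation on vertex positions, which it does because $\mu$ and $\mu'$ agree on every other token and differ by a single transposition of two entries. All remaining structural claims (transitivity preservation, solvability transfer between $\mathcal{I}$ and $\mathcal{I}^\mu$) are immediate from the definitions, so the bulk of the work is the careful case analysis above.
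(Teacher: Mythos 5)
Your proposal is correct and follows essentially the same route as the paper's proof: reduce to Theorem~\ref{thm:generalized_15_puzzle} via color-preserving bijections, observe that the case of distinct colors (and the case where only color $1$ repeats) is handled by the theorem directly, and when some color $c\neq 1$ repeats, use the fact that transposing the targets of two same-colored tokens flips the parity of the induced permutation, so one of the two resulting bijections yields a solvable instance whose solution also solves $\mathcal{I}$. Your write-up is in fact slightly more explicit than the paper's about the case analysis when condition~(i) fails and about why transitivity is preserved under relabeling, but the underlying argument is identical.
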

\begin{proof}
    By Theorem~\ref{thm:generalized_15_puzzle}, we only need to argue that if $\mathcal{I}$ contains two or more tokens of a common color $c^*>1$ then it is a yes-instance.

    We define two color-preserving bijections $\mu', \mu''$ between tokens of the initial and final configuration of $\mathcal{I}$.
    The first bijection $\mu'$ can be arbitrarily (and its existence is guaranteed since the number of tokens of the same color is the same in both the initial and the final configuration).
    Let $T$ be the set of tokens of color other than $1$ in the initial configuration.
    The second bijection $\mu''$ is defined by $\mu''(t) = \mu'(t)$ for all $t \in T \setminus \{t_1, t_2\}$, $\mu''(t_1) = \mu'(t_2))$, and $\mu''(t_2) = \mu'(t_1)$. 
    
    The bijection $\mu'$ (resp.\ $\mu''$) yields an instance $\mathcal{I}'$ (resp.\ $\mathcal{I}''$) of \cctss in which all tokens of color other than $1$ have distinct colors.

    Since the permutation induced by the initial and final positions of the tokens in $\mathcal{I}'$  and  $\mathcal{I}''$ have different parity (as they differ by a single transposition), at least one such instance $\mathcal{I}^* \in \{ \mathcal{I}', \mathcal{I}'' \}$ does not satisfy condition (iii) of Theorem~\ref{thm:generalized_15_puzzle}, and hence it is solvable.

    The claim follows by observing that, since $\mu'$ and $\mu''$ are color-preserving, any sequence solving $\mathcal{I}^*$ also solves $\mathcal{I}$.
\end{proof}

Since the instances of \cctss in which the base graph is a cycle or $T_0$ can be solved in polynomial time,\footnote{When the base graph is $T_0$, the instance can be solved by brute-force. When the base graph is a cycle, a final configuration is reachable if and only it preserves the cyclic sequence of token colors other than $1$.} the characterization of Lemma~\ref{lemma:ccts_p} implies the following.

\begin{theorem}
    The \cctss problem can be solved in polynomial time.
\end{theorem}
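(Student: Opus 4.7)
The plan is to assemble the polynomial-time algorithm directly from the decomposition and characterization machinery developed earlier in the section. Given an input \cctss instance $\mathcal{I}$, I would first invoke the procedure of Kornhauser, Miller, and Spirakis to compute, in polynomial time, the set $\mathcal{C}$ of equivalence classes of $\sim$ on $V \setminus \B$ together with the reachable sets $R(C)$ for every $C \in \mathcal{C}$. I would then extract the sub-instances $\mathcal{I}_C$ as defined in the section, test each of them individually for solvability, and return ``yes'' exactly when every such test succeeds.

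The correctness of this reduction follows from the decomposition lemma: since each $\mathcal{I}_C$ is transitive by the \cite{KornhauserMS84} result, and the lemma tells us that $\mathcal{I}$ is solvable iff every $\mathcal{I}_C$ is solvable, it suffices to have a polynomial-time subroutine for deciding transitive instances. For a transitive sub-instance with base graph $G_C$, I would split into three cases. If $G_C$ is a cycle, the instance is decidable by checking whether the cyclic order of the non-$1$-colored tokens is preserved between the initial and final configurations, which takes linear time. If $G_C$ is isomorphic to $T_0$, the reachable-configuration space has bounded size, so brute force decides it in constant time. In the remaining case, I would apply Lemma~\ref{lemma:ccts_p} directly: the instance is a no-instance iff all tokens have distinct colors, $G_C$ is biconnected and bipartite, and the induced permutation (well-defined under the distinctness condition) is odd---each of these is a polynomial-time test.

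I do not expect any serious mathematical obstacle, since all ingredients are already in place; the care point is purely bookkeeping. Specifically, I would need to verify that the Kornhauser--Miller--Spirakis procedures for computing $\mathcal{C}$ and the sets $R(C)$ apply verbatim in our setting, where multiple tokens may share a color. This is immediate because the relation $\sim$ and the reachability notion both depend only on which vertices are blank, not on the colors of the non-blank tokens; hence the same algorithm applies without modification. Putting everything together, the total running time is dominated by the (polynomial) cost of computing the equivalence classes plus $O(|\mathcal{C}|)$ polynomial-time solvability checks, which yields the desired polynomial overall bound and establishes the theorem.
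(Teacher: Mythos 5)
Your proposal is correct and follows essentially the same route as the paper: decompose the instance into the transitive sub-instances $\mathcal{I}_C$ via the equivalence classes of $\sim$ and the decomposition lemma, then decide each sub-instance using Lemma~\ref{lemma:ccts_p}, with the cycle and $T_0$ base graphs handled separately exactly as in the paper's footnote. The only (implicit) bookkeeping you omit is the w.l.o.g.\ preprocessing that first moves the color-$1$ tokens to their final positions and rejects sub-instances with mismatched color counts, both of which the paper also treats as routine.
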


\section{Conclusions}
\label{sec:conclusion}
In this paper, we closed the problem left open in \cite{YangZ25} by showing that \ccts undergoes a sharp change in complexity when the number of colors increases from $3$ to $4$ (assuming $\mathsf{P} \neq \pspace$). In particular we showed that the problem is \pspace-hard when the swap graph contains an induced path on $4$ vertices.
This leaves open the problem of characterizing the computational complexity of \ccts when the swap graph is $P_4$-free.
A promising first step in this direction could be studying complete bipartite swap graphs, for which we believe the problem to be solvable in polynomial time.

\bibliographystyle{plainurl}
\bibliography{bibliography.bib}

\end{document}